\documentclass[aps,twocolumn,secnumarabic,nobalancelastpage,amsmath,amssymb,
nofootinbib]{revtex4}

\usepackage{amsmath,amsthm,amssymb,xypic,hyperref,amsfonts}
\usepackage{epsfig} 
\usepackage{color}
\usepackage{chapterbib}
\usepackage{graphics}      
\usepackage{graphicx}      
\usepackage{longtable}     
\usepackage{url}          
\usepackage{bm}            

\newcommand{\bra}[1]{\left\langle #1 \right|}
\newcommand{\ket}[1]{\left|#1\right\rangle}

\numberwithin{equation}{section}
\newtheorem{thm}[equation]{Theorem}

\newtheorem{prop}[equation]{Proposition}
\newtheorem{algo}[equation]{Algorithm}

\begin{document}
\title{Nonadditive Quantum Error Correcting Codes Adapted to the Amplitude Damping Channel}
\author         {Ruitian Lang}
\email          {percyl@mit.edu}
\author         {Peter W Shor}
\email          {shor@math.mit.edu}
\affiliation    {Department of Mathematics, MIT, Cambridge, MA02139}
\date{\today}

\begin{abstract}
A family of high rate quantum error correcting codes adapted to the amplitude damping channel is presented. These codes are nonadditive and exploit self-complementary structure to correct all first-order errors. Their rates can be higher than 1/2. The recovery operations of these codes can be generated by a simple algorithm and have a projection nature, which makes them potentially easy to implement.
\end{abstract}

\maketitle

\section{Introduction}
As quantum computation finds wide applications today, the difficulty of implementing quantum computers also arises from the expense of
qubits and their vulnerability to decoherence. Therefore, quantum error correction and fault tolerant quantum computation have been
extensively developed \cite{NC}. Among the most famous quantum error correcting codes are the [9,1] code \cite{91}, the CSS [7,1] code
\cite{71a,71b}, and the [5,1] code \cite{51a,51b}. The encoding rate of these codes are much lower than classical codes due to the
complicated nature of quantum decoherence. However, qubits are expensive, so it is desirable to have codes with higher rate. Unfortunately,
it has been shown that 1/5 (encoding 1 qubit into 5 qubits) is the highest rate for one-qubit code. In fact, quantum error correction codes
satisfy the following quantum Hamming bound \cite{NC}: if an $[n,k]$ code (encoding $k$ qubits into $n$ qubits) can correct errors at at most $t$ qubits and there are $a$ independent possible errors in one qubit, then

\begin{equation}
\sum_{j=0}^{t}\left( \begin{array}{c} n\\ j \end{array}\right)a^{j}2^{k}\leq 2^{n},
\label{hamming}
\end{equation}
It has been shown that if a code corrects all the errors in the Pauli group generated by the Pauli matrices operating on each qubit, then it can correct all types of errors \cite{NC}. Therefore, traditional quantum error correction has concentrated on the errors in the Pauli group, and hence $a=3$ in \eqref{hamming}. Within this framework, it is hard to improve the rate further.


However, not all the errors are equally likely in a realistic channel. If we assume some knowledge about the channel, such that only 1 or 2 independent errors are likely to occur in each qubit, then codes of higher rate can be obtained. Indeed, an iterative optimization method to investigate channel-adapted codes has been proposed by \cite{andrew} and \cite{iteration} and a family of $[2n+2,n]$ stabilizer codes, which correct all the first-order errors and some of the second-order ones in an amplitude damping channel, have been discovered by \cite{andrewthesis}. The amplitude damping channel is the tensor product of one-qubit amplitude damping channels, each of which consists of the following two operator elements:
\begin{equation}
E_{0}=\left( \begin{array}{cc}
1 & 0\\
0 & \sqrt{1-\gamma}\end{array}\right),
E_{1}=\left( \begin{array}{cc}
0 & \sqrt{\gamma}\\
0 & 0\end{array} \right),
\label{channel}
\end{equation}
where $\gamma$ is a parameter measuring the strength of the noise. Here $\ket{1}$ is pictured as an excited state and $\ket{0}$ the ground state. $\gamma$ is the probability of the transition from $\ket{1}$ to $\ket{0}$.


Another important concept, nonadditive codes, was introduced in recent research \cite{sfc, nonad1, nonad2} seeking to find codes with
higher rates than stabilizer codes. In these codes, the codewords do not form a subspace of $\mathbb{F}_{2}^{n}$ and thus the dimension of
the source space is not of the form $2^{k}$. In other words, these codes encode a fractional number of qubits. To avoid confusion, we
denote a (nonadditive) code which encodes a $k$ dimensional space into $n$ qubits by an $(n,k)$ code. This paper combines the ideas of
channel adaption and nonadditive codes to find a new family of codes adapted to the amplitude damping channel which outperform the
$[2n+2,n]$ stabilizer codes in \cite{andrewthesis}.

\section{Self-complementary Codes}
The family of codes to be presented exploits self-complementary structure, which is crucial for correcting errors in the amplitude damping channel. Self-complementary codes appear in different contexts \cite{sfc, andrewthesis} and play different roles. To understand the significance of the structure, we analyze how errors arise and are corrected as follows.


Let $\mathcal{C}$ be a quantum channel. The extent to which the channel preserves information is measured by the entanglement fidelity \cite{NC}
\begin{equation}
F(\rho,\mathcal{C}(\rho))=\mathrm{tr}\sqrt{\rho^{\frac{1}{2}}\mathcal{C}(\rho)\rho^{\frac{1}{2}}},
\label{fid}
\end{equation}
where $\rho$ is the input state. Sometimes, we are more interested in the worst case performance, which corresponds to the minimum entanglement fidelity
\begin{equation}
F_{\mathrm{min}}(\mathcal{C})=\min_{\ket{\psi}}F(\ket{\psi}\bra{\psi},\mathcal{C}(\ket{\psi}\bra{\psi})).
\end{equation}
The restriction to pure states is justified in \cite{NC}. In our case, $\mathcal{C}$ is composed of the encoding operation $\mathcal{U}$,
the amplitude damping channel $\mathcal{E}(\gamma)$ and the recovery operation $\mathcal{R}(\gamma)$. The dependence of $\mathcal{R}$ on
$\gamma$ suggests that the code be channel-adapted. The errors contained in $\mathcal{C}(\gamma)$, appropriately called the residual
errors, can be divided into two classes. First of all, an error may flip a codeword $\ket{w_{1}}\bra{w_{1}}$ to another codeword
$\ket{w_{2}}\bra{w_{2}}$, like an $X$ operator acting on this two-dimensional subspace. We call this an $X$-like error. Assuming no $X$-like errors (up to a specific order), we may still have an error that causes damping or a phase shift (or both) of an off-diagonal entry: $1-\mathrm{tr}(\ket{w_{2}}\bra{w_{1}}\mathcal{C}(\ket{w_{1}}\bra{w_{2}}))\neq 0$. Different errors cause this problem, including the $Z$ operator in the two dimensional space spanned by $\ket{w_{1}}$ and $\ket{w_{2}}$. For simplicity, we call these errors $Z$-like errors. The following proposition shows that this list is exhaustive, up to the first order.

\begin{prop}\label{XZerrors}
The code corrects all the first-order errors in the sense that $F_{\mathrm{min}}=1-o(\gamma^{2})$ if and only if $\mathcal{C}(\gamma)$ do not contain any $X$-like errors or $Z$-like errors to the first order.
\end{prop}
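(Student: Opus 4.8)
The plan is to turn the fidelity condition into a statement about the first-order (in $\gamma$) part of the channel $\mathcal{C}=\mathcal{C}(\gamma)$, and then to match that statement against the two error types. First I would write $\mathcal{C}$ in Kraus form on the code space, $\mathcal{C}(\rho)=\sum_{\alpha}A_{\alpha}\rho A_{\alpha}^{\dagger}$, normalised so that $A_{0}=I+\sqrt{\gamma}\,E_{0}+O(\gamma)$ and $A_{\alpha}=\sqrt{\gamma}\,E_{\alpha}+O(\gamma)$ for $\alpha\geq1$, the $E_{\alpha}$ being the first-order error operators restricted to the code (a non-scalar zeroth-order operator would make the code fail outright, so this costs nothing). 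Then I would use the exact identity
\begin{equation}
1-\bra{\psi}\mathcal{C}(\ket{\psi}\bra{\psi})\ket{\psi}=\sum_{\alpha}\Bigl(\|A_{\alpha}\ket{\psi}\|^{2}-|\bra{\psi}A_{\alpha}\ket{\psi}|^{2}\Bigr)=\sum_{\alpha}\mathrm{Var}_{\psi}(A_{\alpha}),
\label{eq:varid}
\end{equation}
a sum of nonnegative variances; since adding a scalar to an operator leaves its variance unchanged, $\mathrm{Var}_{\psi}(A_{\alpha})=\gamma\,\mathrm{Var}_{\psi}(E_{\alpha})+O(\gamma^{3/2})$, so $1-F_{\mathrm{min}}=\tfrac{\gamma}{2}\max_{\ket{\psi}}\sum_{\alpha}\mathrm{Var}_{\psi}(E_{\alpha})+(\text{higher order})$. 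Hence $F_{\mathrm{min}}=1-o(\gamma^{2})$ holds precisely when this first-order contribution is absent, i.e.\ $\mathrm{Var}_{\psi}(E_{\alpha})=0$ for every $\alpha$ and every code state $\ket{\psi}$. As a variance vanishing in every state forces the operator to be a multiple of the identity ($\mathrm{Var}_{\psi}(B)=\|(I-\ket{\psi}\bra{\psi})B\ket{\psi}\|^{2}$ is zero for all $\ket{\psi}$ iff every vector is an eigenvector of $B$), this is equivalent to every $E_{\alpha}$ acting as a scalar on the code, equivalently to the first-order part $\mathcal{C}_{1}$ of $\mathcal{C}$ being a pure rotation $\mathcal{C}_{1}(\rho)=i[H,\rho]$.

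Next I would put $\mathcal{C}_{1}$ in Lindblad form (legitimate by complete positivity) with jump operators $E_{\alpha}$ and some Hamiltonian $H$, and read off, in the codeword basis $\{\ket{w_{i}}\}$, that the first-order rate of population transfer $\ket{w_{i}}\to\ket{w_{j}}$ ($j\neq i$) equals $\sum_{\alpha}|(E_{\alpha})_{ji}|^{2}$, while the real part of the first-order change of the coherence $\mathrm{tr}(\ket{w_{j}}\bra{w_{i}}\mathcal{C}(\ket{w_{i}}\bra{w_{j}}))$ equals $\sum_{\alpha}\bigl[\mathrm{Re}\,(E_{\alpha})_{ii}\overline{(E_{\alpha})_{jj}}-\tfrac12\sum_{m}\bigl(|(E_{\alpha})_{mi}|^{2}+|(E_{\alpha})_{mj}|^{2}\bigr)\bigr]$, which is $\leq0$ by Cauchy--Schwarz and AM--GM, with equality iff every $E_{\alpha}$ sends $\ket{w_{i}}$ and $\ket{w_{j}}$ to the same scalar multiple of themselves. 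This yields the correspondence: no first-order $X$-like errors $\iff$ every $E_{\alpha}$ is diagonal in the codeword basis, and, given that, no first-order $Z$-like errors $\iff$ each diagonal $E_{\alpha}$ has all its entries equal, i.e.\ $E_{\alpha}\propto I$ (the coherences are then preserved up to a phase, on which more below). So ``no $X$-like and no $Z$-like errors'' is exactly the condition of the first paragraph and hence equivalent to $F_{\mathrm{min}}=1-o(\gamma^{2})$. For the strict converse I would also exhibit the bad state directly: a surviving off-diagonal entry of some $E_{\alpha}$ drops $\bra{w_{i}}\mathcal{C}(\ket{w_{i}}\bra{w_{i}})\ket{w_{i}}$ by $\Theta(\gamma)$ for a suitable codeword, and a surviving spread of diagonal entries drops $\bra{\psi}\mathcal{C}(\ket{\psi}\bra{\psi})\ket{\psi}$ by $\Theta(\gamma)$ for $\ket{\psi}$ an equal superposition of the offending pair.

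The hard part will be showing that the $X$/$Z$ dichotomy is genuinely \emph{exhaustive}. The ``wrong-pair'' matrix elements $\bra{w_{i}}\mathcal{C}(\ket{w_{j}}\bra{w_{k}})\ket{w_{l}}$ with $\{i,l\}\neq\{j,k\}$ are not named in the classification and must be shown to be automatically higher order once the $X$-like and $Z$-like ones are; I expect this to follow again from complete positivity, since such a term is $\sum_{\alpha}(E_{\alpha})_{pq}\overline{(E_{\alpha})_{rs}}$ and is bounded, by Cauchy--Schwarz in the index $\alpha$, by the geometric mean of two genuine $X$/$Z$-type magnitudes, while the nonnegativity of each summand in \eqref{eq:varid} forbids any compensating cancellation among different $\alpha$. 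A second point needing care is the leftover rotation $e^{i\gamma H}$: if $H$ has a nonconstant diagonal in the codeword basis this is literally a first-order phase shift of an off-diagonal entry, yet, being unitary, it is harmless for $F_{\mathrm{min}}$; the resolution is to absorb it into the recovery $\mathcal{R}(\gamma)$ before applying the $X$/$Z$ classification, so the statement is really about the best available recovery, and it is here that the freedom in choosing $\mathcal{R}(\gamma)$ is used.
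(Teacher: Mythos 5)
Your argument is sound and reaches the right conclusion, but it travels a genuinely different road from the paper's. The paper never introduces Kraus operators: for the ``only if'' direction it exhibits exactly the bad states you name at the end ($\ket{w_1}$ against an $X$-like error, $(\ket{w_1}+\ket{w_2})/\sqrt{2}$ against a $Z$-like one), and for the ``if'' direction it asserts that the absence of both error types means no entry of the superoperator matrix $I-\mathcal{C}(\gamma)$ in the basis $\{\ket{w_i}\bra{w_j}\}$ has a first-order term, so $\|I-\mathcal{C}(\gamma)\|$ has none, and continuity of $F$ together with compactness of the unit sphere finishes. In other words, the paper takes the exhaustiveness of the $X$/$Z$ dichotomy---the very point you isolate as the hard part---as essentially built into the definitions, whereas you actually prove it by descending to the Kraus level: the variance identity converts the fidelity condition into ``every first-order Kraus perturbation acts as a scalar on the code,'' and complete positivity ties that to the two named error types. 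What your route buys is a real proof of exhaustiveness; the decisive observation is that the first-order $X$-like rate is a sum of squares over $\alpha$, so its vanishing kills each matrix element $(E_\alpha)_{ji}$ individually and hence every wrong-pair product term by term---lead with that rather than with Cauchy--Schwarz, which applied to the full $\gamma$-dependent quantities instead of the extracted coefficients would only yield an $o(\sqrt{\gamma})$ bound. Your route also surfaces a subtlety the paper's proof silently skips: a coherent first-order rotation is literally a ``phase shift of an off-diagonal entry,'' yet costs only $O(\gamma^{2})$ in fidelity, so the paper's test state $(\ket{w_1}+\ket{w_2})/\sqrt{2}$ does not in fact produce a first-order fidelity drop for that species of $Z$-like error; absorbing the rotation into $\mathcal{R}(\gamma)$, as you propose, is the right repair (and is moot for the actual amplitude-damping composite with projective recovery, where all matrix elements are real and nonnegative and no Hamiltonian term arises). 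The one loose thread is your $O(\gamma^{3/2})$ remainders; for the channel at hand the composite superoperator is analytic in $\gamma$, so these cannot survive to spoil $F_{\mathrm{min}}=1-O(\gamma^{2})$, but that deserves a sentence.
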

\begin{proof}
If $\mathcal{C}(\gamma)$ contains an $X$-like error on $\ket{w_{1}}$ or a $Z$-like error on $\ket{w_{1}}\bra{w_{2}}$, then choose $\ket{\psi}=\ket{w_{1}}$ or $\ket{\psi}=(\ket{w_{1}}+\ket{w_{2}})/\sqrt{2}$, respectively, to show that $F(\ket{\psi}\bra{\psi},\mathcal{C}(\ket{\psi}\bra{\psi}))$ contains a first-order term.

Conversely, assume that $\mathcal{C}(\gamma)$ does not contain any $X$-like or $Z$-like errors to the first order. Call the underlying Hilbert space $H$, which is finite dimensional. Then $\mathcal{C}(\gamma)$ is a linear operator on $\mathcal{L}(H,H)$, the space of the linear operators on $H$. By the assumption, the entries of the matrix $I-\mathcal{C}(\gamma)$ in the basis $\{\ket{w_{1}}\bra{w_{2}}\}$ do not contain first-order terms, so $\|I-\mathcal{C}(\gamma)\|=o(\gamma^{2})$. The conclusion follows from the continuity of $F(\rho,\sigma)$ and the compactness of the unit sphere in $H$.
\end{proof}


Consider an $(n,k)$ code. For a codeword $u\in\{0,1\}^{n}$, we denote by $\bar{u}$ its complement such that $\bar{u}_{j}=1-u_{j}$ for all $j$. We call a code self-complementary, if the code is spanned by kets of the form $(\ket{u}+\ket{\bar{u}})/\sqrt{2}$. We define the inner product of two codewords $u$ and $v$ by $u\cdot v=\sum_{j=1}^{n}u_{j}v_{j}$ and thus the Hamming weight of $u$ is $\|u\|^{2}$. By the definition \eqref{channel} of the amplitude damping channel, after passing through the channel, $(\ket{u}+\ket{\bar{u}})/\sqrt{2}$ becomes
\begin{equation}\label{fu'}
\ket{\tilde{f}(u)}=\frac{1}{\sqrt{2}}((1-\gamma)^{\frac{1}{2}\|u\|^{2}}\ket{u}+(1-\gamma)^{\frac{1}{2}\|\bar{u}\|^{2}}\ket{\bar{u}}).
\end{equation}
Let $\ket{f(u)}$ be the unit vector associated to $\ket{\tilde{f}(u)}$, and $\ket{g(u)}$ be the unit vector in $\mathrm{span\{\ket{u},\ket{\bar{u}}\}}$ orthogonal to $\ket{f(u)}$.
It is thus reasonable to correct $\ket{f(u)}$ to the codeword $(\ket{u}+\ket{\bar{u}})/\sqrt{2}$. When we talk about self-complementary codes adapted to the amplitude-damping channels, we will always assume this recovery operation. Now we will show that the self-complementary structure eliminates the $Z$-like errors provided that the $X$-like errors have been eliminated.


\begin{thm}\label{sc}
Let $\mathcal{C}$ be the composite channel of a self-complementary code subject to an amplitude damping channel. If $\mathcal{C}(\gamma)$ does not contain $X$-like errors to the first order, it does not contain $Z$-like errors either.
\end{thm}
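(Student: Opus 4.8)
The plan is to evaluate the matrix element $\bra{w_1}\mathcal{C}(\ket{w_1}\bra{w_2})\ket{w_2}$ for an arbitrary pair of codewords $\ket{w_1}=(\ket{u}+\ket{\bar u})/\sqrt{2}$, $\ket{w_2}=(\ket{v}+\ket{\bar v})/\sqrt{2}$ and to show it equals $1+O(\gamma^2)$; since the $Z$-like defect $1-\mathrm{tr}(\ket{w_2}\bra{w_1}\mathcal{C}(\ket{w_1}\bra{w_2}))$ is just $1$ minus that matrix element, this is exactly what is required. Write $\mathcal{C}=\mathcal{R}\circ\mathcal{E}(\gamma)$ and expand the damping channel in its Kraus operators $A_a=E_{a_1}\otimes\cdots\otimes E_{a_n}$, $a\in\{0,1\}^n$; each $A_a$ carries a factor $\gamma^{\|a\|^2/2}$, so only $a=0$ and the weight-one strings $e_i$ matter below second order.

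The next step is to record how these dominant Kraus operators act on a self-complementary codeword. Here $A_0\ket{w_j}=\ket{\tilde f(u_j)}=\sqrt{N_j}\,\ket{f(u_j)}$ with $N_j=\tfrac12\bigl((1-\gamma)^{\|u_j\|^2}+(1-\gamma)^{\|\bar u_j\|^2}\bigr)$, while, because exactly one of $u_j,\bar u_j$ carries a $1$ in position $i$, $A_{e_i}\ket{w_j}=\sqrt{\gamma/2}\,(1-\gamma)^{(m_{ij}-1)/2}\,\ket{y_{ij}}$, where $\ket{y_{ij}}$ is the computational basis vector obtained by clearing that bit and $m_{ij}$ is its Hamming weight. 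This is precisely where self-complementarity enters: $\|u_j\|^2+\|\bar u_j\|^2=n$ forces $N_j=1-\tfrac{n}{2}\gamma+O(\gamma^2)$ and $\sum_i\|A_{e_i}\ket{w_j}\|^2=\tfrac{n}{2}\gamma+O(\gamma^2)$, so the first-order amplitude the no-damping branch loses is carried, uniformly in $j$, by the $n$ one-damping branches.

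Now I would run $\mathcal{E}(\ket{w_1}\bra{w_2})$ through $\mathcal{R}$. Since $\ket{f(u)}$ is by definition the normalization of $A_0\ket{w_1}$ (and similarly for $v$), the no-damping part of $\mathcal{E}(\ket{w_1}\bra{w_2})$ is exactly $\sqrt{N_1N_2}\,\ket{f(u)}\bra{f(v)}$, which lies in the no-error block where $\mathcal{R}$ applies the unitary sending $\ket{f(u_j)}$ to $\ket{w_j}$; each one-damping part is $\tfrac{\gamma}{2}(1-\gamma)^{(m_{i1}+m_{i2}-2)/2}\,\ket{y_{i1}}\bra{y_{i2}}$, lying in the $e_i$-error block. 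The hypothesis that $\mathcal{C}$ has no first-order $X$-like errors is exactly what guarantees that the recovery resolves these latter blocks cleanly and routes each one-damped codeword $\ket{y_{ij}}$ back to $\ket{w_j}$ itself, with no spurious phase, rather than to another codeword or a superposition involving one --- such a misrouting would put a first-order term into $\bra{w_k}\mathcal{C}(\ket{w_j}\bra{w_j})\ket{w_k}$ for some $k\neq j$. Granting this, $\bra{w_1}\mathcal{R}(\ket{f(u)}\bra{f(v)})\ket{w_2}=1$ and $\bra{w_1}\mathcal{R}(\ket{y_{i1}}\bra{y_{i2}})\ket{w_2}=1$, and summing the branches gives $\bra{w_1}\mathcal{C}(\ket{w_1}\bra{w_2})\ket{w_2}=\sqrt{N_1N_2}+\tfrac{\gamma}{2}\sum_{i=1}^{n}(1+O(\gamma))+O(\gamma^2)=\bigl(1-\tfrac{n}{2}\gamma\bigr)+\tfrac{n}{2}\gamma+O(\gamma^2)=1+O(\gamma^2)$.

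The two Taylor expansions are routine bookkeeping; the real obstacle is the third step, controlling $\mathcal{R}$ on the off-diagonal operators $\ket{f(u)}\bra{f(v)}$ and $\ket{y_{i1}}\bra{y_{i2}}$. Knowing only that $\mathcal{R}$ returns each one-damped codeword to its codeword \emph{on the diagonal} is not enough: a phase-damping-like recovery would annihilate those off-diagonal operators without ever creating an $X$-like error. One therefore has to invoke the specific ``syndrome projection followed by a single unitary correction'' form of the standard recovery, together with self-complementarity --- which is what makes every one-damping branch contribute $+1$ rather than a phase that could partly cancel the $-\tfrac{n}{2}\gamma$ coming from $\sqrt{N_1N_2}$.
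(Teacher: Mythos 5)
\noindent Your proposal is correct and follows essentially the same route as the paper: expand the damping channel into its zero- and one-damping Kraus branches, use the no-$X$-like-error hypothesis to route each once-damped word back to its codeword, and invoke $\|u\|^{2}+\|\bar{u}\|^{2}=n$ so that the $-\tfrac{n}{2}\gamma$ from the undamped branch is exactly cancelled by the $+\tfrac{n}{2}\gamma$ from the $n$ one-damping branches (the paper packages this as $(u+\bar{u})\cdot(v+\bar{v})=n$). The subtlety you flag at the end --- that $\mathcal{R}$ must act coherently on the off-diagonal operators $\ket{u-e^{(j)}}\bra{v-e^{(j)}}$, i.e.\ a single operator element must handle both sides --- is real but is likewise passed over in the paper's proof, where it is implicit in the fixed projection-plus-unitary recovery (one operator element per error vector) declared before the theorem.
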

\begin{proof}
Let $e^{(j)}\in\{0,1\}^{n}$ denote the error vector such that $e^{(j)}_{i}=\delta_{ij}$. By the assumption, if $\ket{w}=(\ket{u}+\ket{\bar{u}})/\sqrt{2}$ is in the code, then $\ket{u-e^{(j)}}$ will be corrected fully to $\ket{w}$ provided that $u_{j}=1$ since otherwise this results in a first-order $X$-like error. When $u_{j}=0$, we agree that $\ket{u-e^{(j)}}=0$. Now consider two codewords $\ket{w_{1}}=(\ket{u}+\ket{\bar{u}})/\sqrt{2}$ and $\ket{w_{2}}=(\ket{v}+\ket{\bar{v}})/\sqrt{2}$, where $u,v\in\{0,1\}^{n}$. We have
\begin{eqnarray*}
&&\mathcal{C}(\ket{w_{1}}\bra{w_{2}})\\
&=&\mathcal{R}[\ket{\tilde{f}(u)}\bra{\tilde{f}(v)}+\frac{\gamma}{2}\sum_{j=1}^{n}(\ket{u-e^{(j)}}\bra{v-e^{(j)}}\\
&&+\ket{u-e^{(j)}}\bra{\bar{v}-e^{(j)}}+\ket{\bar{u}-e^{(j)}}\bra{v-e^{(j)}}\\
&&+\ket{\bar{u}-e^{(j)}}\bra{\bar{v}-e^{(j)}})+o(\gamma^{2})]\\
&=&\frac{\sqrt{((1-\gamma)^{\|u\|^{2}}+(1-\gamma)^{\|\bar{u}\|^{2}})((1-\gamma)^{\|v\|^{2}}+(1-\gamma)^{\|\bar{v}\|^{2}})}}{2}\\
&&\times\ket{w_{1}}\bra{w_{2}}+\frac{\gamma}{2}(u\cdot v+u\cdot\bar{v}+\bar{u}\cdot v+\bar{u}\cdot\bar{v})\ket{w_{1}}\bra{w_{2}}+o(\gamma^{2})\\
&=&\frac{1}{2}(2-\frac{\gamma}{2}(\|u\|^{2}+\|\bar{u}\|^{2}+\|v\|^{2}+\|\bar{v}\|^{2}))\ket{w_{1}}\bra{w_{2}}\\
&&+\frac{\gamma}{2}(u+\bar{u})\cdot(v+\bar{v})\ket{w_{1}}\bra{w_{2}}+o(\gamma^{2})
\end{eqnarray*}
Since $\|u\|^{2}+\|\bar{u}\|^{2}=\|v\|^{2}+\|\bar{v}\|^{2}=(u+\bar{u})\cdot(v+\bar{v})=n$, the conclusion follows.
\end{proof}

\section{Codewords and Recovery}
It is clear from Theorem \ref{sc} that a self-complementary code corrects all the first-order errors in an amplitude-damping channel if and only if no confusion arises assuming the decay occurs at no more than one qubit. More precisely, $C=\mathrm{span}\{(\ket{u}+\ket{\bar{u}})/\sqrt{2}|u\in S\}$ is such a code if and only if the set $S\subset\{0,1\}^{n}$ satisfies
\begin{description}
\item{(S1)}If $u\in S$, then $\bar{u}\in S$; and
\item{(S2)}If $u, v\in S$ and $u-e^{(i)}=v-e^{(j)}$ for some $i,j$, then $u=v$.
\end{description}
The dimension of the code is $k=\mathrm{dim}C=|S|/2$. We use a greedy algorithm to search for maximal sets $S$ satisfied the above
conditions for some small values of $n$ and the results are listed in Table 1. The $\log_{2}k$ column approximately indicates the number of qubits that can be encoded. Linear regression yields that the slope of $y\sim x$ curve is 0.85, which is higher than 0.5 as in \cite{andrewthesis}.

\begin{center}
Table 1: Encoded Dimensions for Small Values of $n$

\begin{tabular}{ccc}
\\
\hline
$n$ & $k$ & $\log_{2}k$\\
\hline
4 & 2 & 1.00\\
5 & 2 & 1.00\\
6 & 5 & 2.32\\
7 & 8 & 3.00\\
8 & 12 & 3.58\\
9 & 18 & 4.17\\
10 & 41 & 5.36\\
11 & 78 & 6.28\\
12 & 146 & 7.19\\
13 & 273 & 8.09\\
14 & 515 & 9.01\\
15 & 931 & 9.86\\
16 & 1716 & 10.74\\
\hline
\end{tabular}
\end{center}

We now describe the algorithm to generate the recovery operation. We use the maximum likelihood recovery. 
\begin{algo}\label{recovery}The error vectors refer to vectors in $\{0,1\}^{n}$, where $1$ indicates a decay in that qubit. The error vectors are sorted by their weights and those with the same weight are sorted in the dictionary order.
\begin{description}
\item{Step 1} The first operator element of $\mathcal{R}$ corrects $\ket{f(u)}$ in \eqref{fu'} to $(\ket{u}+\ket{\bar{u}})/\sqrt{2}$. We are considering the error vector $e=(0,...,0)$.
\item{Step 2} Put $e$ to be its successor and come to construct the next operator element of $\mathcal{R}$.
\item{Step 3} Find a word $u\in S$ such as $u-e\geq0$, which means that the decay $e$ may happen to $u$, do the following:
\begin{description}
\item{Step 3.1} If $u-e\in S$ and both $u-e$ and $\overline{u-e}$ have not appeared in $\mathcal{R}$ except in the first operator element, then correct $\ket{g(u-e)}$ to $(\ket{u}+\ket{\bar{u}})/\sqrt{2}$. Go to Step 3.3.
\item{Step 3.2} If $u-e$ has not appeared in $\mathcal{R}$, then correct $\ket{u-e}$ to $(\ket{u}+\ket{\bar{u}})/\sqrt{2}$.
\item{Step 3.3} Search for the next word $u\in S$ with $u-e\geq0$ and go back to Step 3.1, until all the words in $S$ have been exhausted.
\end{description}
\item{Step 4} If the sum of the ranks of all the constructed operator elements of $\mathcal{R}$ equals $n$, then stop. Otherwise, go back to Step 2.
\end{description}
\end{algo}

By the construction, the operator elements of $\mathcal{R}$ are orthogonal projections, which makes it easy to implement as a quantum circuit. By Theorem \ref{sc}, $\mathcal{R}$ corrects all the first-order errors in the amplitude damping channel.

\section{An Example}
We demonstrate the performance of our codes by a simple example, the (8,12) code, which encodes a 12 dimensional space to 8 qubits. In this case
\begin{eqnarray*}
S&=&\{00000000,00000011,00001100,00110000.11000000,\\
&&10101000,01011000,01100100,10010100,11110000,\\
&&11001100,00111100 \text{ and their complements}\}
\end{eqnarray*}

The entanglement fidelity of this code on the ensemble $I/12$ is plotted and compared with the unprotected 3 qubits, since this code can encode at most 3 qubits. We can see that the code has the desired behavior when $\gamma\to 0$ (the linear term vanishes) and it also has a good performance for larger $\gamma$. See Fig. \ref{figure1}.

\begin{figure}
\centering
\includegraphics[width=8cm]{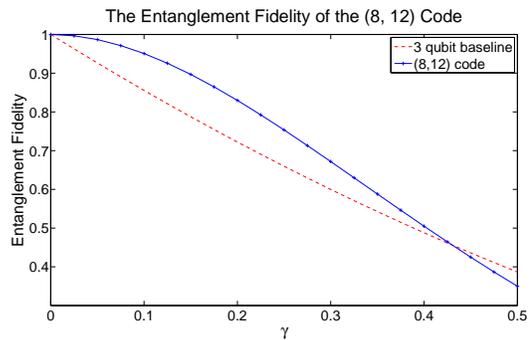}
\caption{The entanglement fidelity of the (8, 12) code versus the parameter $\gamma$, on the initial state $\rho_{0}=I/12$. The dashed curve is the entanglement fidelity of the bare 3 qubits on the initial state $I/8$.}\label{figure1}
\end{figure}

\section{Conclusion}
We have described the structure of a family of nonadditive quantum error correcting codes adapted to the amplitude damping channel with the emphasis on the significance of its self-complementary structure. The code has an extremely high rate and thus may turn out useful in quantum computation and quantum teleportation. Since the code is nonadditive, its decoding does not have the syndrome-diagnosis and recovery structure of the stabilizer codes, so we are concerned with designing the efficient quantum decoding circuits for these codes. It is also challenging to figure out the cardinality of $S$ satisfying (S1) and (S2) mentioned in the third section in a closed form or its asymptotic behavior. Table 1 is constructed using the greedy algorithm, so $S$ may well have a larger size than listed there.

This research is supported by MIT Undergraduate Research Opportunity Program.  PWS was supported in part by 
the W. M. Peck Center for Extreme Quantum Information Theory and by the National Science Foundation through
grant CCF-0431787.  The authors also thank Dr. Andrew Fletcher for his altruistic help.

\end{document}